\documentclass[aps,pra,twocolumn,superscriptaddress,longbibliography]{revtex4-2}
\usepackage{amsmath,amssymb,amsfonts}
\usepackage{amsthm}
\usepackage{graphicx}
\usepackage{orcidlink}
\usepackage{tikz}
\usepackage{hyperref}
\usepackage{braket}

\newtheorem{theorem}{Theorem}
\newtheorem{lemma}[theorem]{Lemma}
\newtheorem{proposition}[theorem]{Proposition}

\hypersetup{pdfborder={0 0 0}, colorlinks=true, citecolor=magenta, linkcolor=blue, urlcolor=violet}

\begin{document}

\title{Exact Quantum Maxima of the $n$-Cycle Overlap Inequalities}

\author{Mohd Asad Siddiqui\orcidlink{0000-0001-5003-7571}}
\email{asad@ctp-jamia.res.in}
\affiliation{Jawaharlal Nehru Rajkeeya Mahavidyalaya, Sri Vijaya Puram, 744104, Andaman and Nicobar Islands, India}

\begin{abstract}
We derive the exact quantum maximum over all finite-dimensional quantum realizations
of the $n$-cycle overlap inequalities,
$S_n^{\max}=n\cos^2(\pi/(2n))-1$,
valid for arbitrary cycle length $n\ge3$. The bound is saturated by an
explicit family of coplanar qubit states equally spaced along a
Fubini--Study geodesic of length $(n-1)\pi/(2n)$, establishing dimensional
saturation of the overlap-cycle hierarchy. Thus, the global optimum over
all finite-dimensional quantum realizations is already achieved in
dimension two. For three paths, coherence-free models satisfy
$\mathcal{V}_{12}^2+\mathcal{V}_{23}^2-\mathcal{V}_{13}^2\le1$, whereas
quantum theory allows the larger value $5/4$. Within generalized
noncontextuality frameworks, violations witness preparation
contextuality. We further derive explicit visibility thresholds for
arbitrary cycle length, identifying interference visibility as an
operational probe of overlap-based nonclassicality, and discuss feasible
photonic implementations.
\end{abstract}
\maketitle

\section{Introduction}

Quantum coherence is a defining feature of quantum theory and a primary
resource for quantum technologies. In the standard resource
theory~\cite{Baumgratz2014}, coherence is defined relative to a fixed
reference basis. A complementary, basis-independent notion was
introduced by Designolle \emph{et al.}~\cite{Designolle2021}, which
characterizes coherence as the impossibility of simultaneously
diagonalizing a set of quantum states in a common basis. For triples of
states with pairwise overlaps
$r_{ij}=|\langle d_i|d_j\rangle|^2$, coherence-free models satisfy
nontrivial overlap inequalities~\cite{Galvao2020,Wagner2024}.
Violations of these inequalities witness basis-independent coherence
and, within Spekkens' generalized noncontextuality framework,
also imply preparation contextuality~\cite{Spekkens2005,Galvao2020,Wagner2024}.
Recent work has also explored the interplay between coherence and
contextuality in interferometric settings~\cite{Wagner2024MZI}.

State overlaps are often accessed through tomography or specialized
overlap-estimation protocols such as SWAP tests~\cite{Ekert2002}.
Multi-path interferometry provides a natural operational alternative:
coherence manifests through pairwise interference visibilities
$\mathcal{V}_{ij}$~\cite{Qureshi2019}, which, under ideal symmetric
conditions, directly encode the detector-state overlaps. This connection
is closely related to coherence-based formulations of multipath
wave-particle duality~\cite{Bera2015,Qureshi2017}.

We show that a standard multi-path interferometer provides an
interferometric realization of preparation-noncontextuality tests
within the generalized contextuality framework~\cite{Spekkens2005,Mazurek2016}. Using only pairwise visibility measurements, the interferometer prepares a family of detector states whose pairwise overlaps are operationally determined by the measured visibilities. In this framework, any jointly diagonalizable (i.e., coherence-free) description of a three-path interferometer must satisfy
\begin{equation}
\mathcal{V}_{12}^2 + \mathcal{V}_{23}^2 - \mathcal{V}_{13}^2 \le 1,
\end{equation}
whereas quantum theory allows the larger value $5/4$, attained by pure
qubit states. We further generalize to arbitrary $n$-cycle inequalities
and determine the exact quantum maximum
\begin{equation}
S_n^{\max}
=
n\cos^2\!\left(\frac{\pi}{2n}\right)-1,
\end{equation}
where $S_n=\sum_{i=1}^{n-1}r_{i,i+1}-r_{1n}$ is the $n$-cycle overlap
expression introduced in Sec.~\ref{sec:tight-n}. This value is attained
by an explicit family of pure qubit states, establishing dimensional
saturation for the overlap-cycle hierarchy: the exact quantum maximum
over all finite-dimensional Hilbert spaces is already achieved in
dimension two, and higher-dimensional systems offer no advantage.

Previous work established overlap-cycle inequalities and
identified the maximal quantum violation for the three-state
case~\cite{Galvao2020}, while more general graph-based
classicality inequalities were subsequently developed in
Ref.~\cite{Wagner2024}. While the three-state case is exactly solvable,
the exact quantum maximum for arbitrary $n$-cycle overlap inequalities
has remained unknown.

Exact quantum bounds are known for several cycle contextuality scenarios,
beginning with the KCBS inequality and its $n$-cycle
generalizations~\cite{Klyachko2008,Araujo2013}. Although the
overlap-cycle inequalities studied here involve pairwise state overlaps
rather than contextual correlations, the exact quantum maximum likewise
exhibits a simple trigonometric dependence on $n$.

Here we solve the overlap-cycle optimization problem analytically and
determine the exact quantum maximum. The optimal configurations
correspond to pure qubit states equally spaced along a Fubini--Study
geodesic of length $(n-1)\pi/(2n)$, where distance is measured with
respect to the Fubini--Study metric~\cite{AnandanAharonov1990,BengtssonZyczkowski2017}. 
Equivalently, they are realized by equally spaced
states on a great circle of the Bloch sphere.

This paper is organized as follows.
Section~\ref{sec:interferometer} derives the visibility formulation for
multipath interferometers.
Section~\ref{sec:three_path} presents the three-path visibility
inequality and its maximal quantum violation.
Section~\ref{sec:lsss} connects the visibility inequalities to
preparation noncontextuality.
Section~\ref{sec:tight-n} generalizes the results to arbitrary cycle
lengths and determines the exact quantum maximum.
Section~\ref{sec:discussion} discusses experimental feasibility.
Finally, Section~\ref{sec:conclusion} concludes the paper.

\section{Interferometer and Visibilities}
\label{sec:interferometer}

We assume standard quantum mechanics, in which interference is fully
described by pairwise terms and higher-order interference is absent.
Consider an $n$-path interferometer in which a quantum particle is
prepared in the pure state
\begin{equation}
\ket{\Psi}
=
\sum_{i=1}^{n} c_i \ket{\psi_i},
\qquad
\sum_i |c_i|^2 = 1,
\end{equation}
where $c_i$ are path amplitudes and $\{\ket{\psi_i}\}$ is an orthonormal
path basis.

Which-path information is extracted by coupling each path to a detector
initially prepared in the reference state $\ket{d_0}$. The interaction
correlates each path with a distinct detector state,
\begin{equation}
\ket{\psi_i}\ket{d_0}
\longrightarrow
\ket{\psi_i}\ket{d_i}.
\end{equation}
The resulting particle--detector state is
\begin{equation}
\ket{\Psi}_{QD}
=
\sum_i c_i \ket{\psi_i}\ket{d_i}.
\end{equation}

Tracing over the path degree of freedom yields the reduced detector state
\begin{equation}
\rho_D
=
\sum_i |c_i|^2 \ket{d_i}\bra{d_i}.
\end{equation}

We assume that the detector states $\ket{d_i}$ are independent of
which subset of paths is opened, so that all pairwise visibilities are
associated with a single underlying family of detector states.
We further assume stable path amplitudes and ideal interferometric
conditions in which visibility loss arises solely from detector-state
distinguishability.

For any pair of paths $(i,j)$, the corresponding two-path interference
visibility is~\cite{Qureshi2019}
\begin{equation}
\mathcal{V}_{ij}
=
\frac{2|c_i c_j|}
{|c_i|^2+|c_j|^2}
\,|\langle d_i|d_j\rangle|.
\label{eq:vis-general}
\end{equation}
Related analyses of wave-particle duality in three-path interferometers
have been reported in Ref.~\cite{Siddiqui2015}.

We define the squared overlap
\begin{equation}
r_{ij}
=
|\langle d_i|d_j\rangle|^2.
\end{equation}
Operationally, $r_{ij}$ is the probability of obtaining outcome
$\ket{d_i}$ when measuring the state $\ket{d_j}$. While the visibility
is linear in $|\langle d_i|d_j\rangle|$, its square is proportional to
$r_{ij}$ and coincides with it in the symmetric case discussed below.

For equal path amplitudes,
\[
|c_i|=\frac1{\sqrt n}
\qquad (i=1,\ldots,n),
\]
Eq.~\eqref{eq:vis-general} simplifies to
\begin{equation}
\mathcal{V}_{ij}
=
|\langle d_i|d_j\rangle|,
\qquad
\mathcal{V}_{ij}^{\,2}
=
r_{ij}.
\label{eq:vis-sym}
\end{equation}
Thus pairwise visibility measurements directly access the overlaps
appearing in the overlap inequalities, without requiring state
tomography, SWAP tests, or additional measurement settings.

The identification $\mathcal{V}_{ij}^{\,2}=r_{ij}$ assumes pure detector
states and ideal interferometric conditions. For mixed detector states,
visibility generally no longer coincides exactly with the overlap
relation used here, and the tight quantum bounds derived below need not
apply directly.

In realistic experiments, decoherence, mode mismatch, and other
imperfections reduce the observed visibilities, as studied in
interferometric complementarity under decoherence~\cite{Sharma2020}.
For calibrated visibility-reduction models in which imperfections only
decrease the ideal visibilities, observed violations of the overlap
inequalities remain sufficient to certify nonclassicality.

\section{Classical polytope and visibility inequality (three paths)}
\label{sec:three_path}

For three detector states
$\{|d_1\rangle,|d_2\rangle,|d_3\rangle\}$,
the classically allowed overlap triples
$(r_{12},r_{23},r_{13})$ form a convex polytope $C$ consisting of
overlaps realizable by jointly diagonalizable states. The nontrivial
facet inequalities of this polytope
are~\cite{Galvao2020,Wagner2024}
\begin{equation}
r_{12}+r_{23}-r_{13}\le1,
\label{classical}
\end{equation}
together with its permutations.

Equivalently, Eq.~\eqref{classical} can be rewritten as
\begin{equation}
(1-r_{13})\le(1-r_{12})+(1-r_{23}),
\label{triangle-disagreement}
\end{equation}
which takes the form of a triangle inequality for the disagreement
probabilities
\[
P_{ij}(\mathrm{disagree})=1-r_{ij}.
\]
In any jointly diagonalizable description, the disagreement probability
between states $1$ and $3$ cannot exceed the sum of the disagreement
probabilities involving the adjacent pairs $(1,2)$ and $(2,3)$.

Using Eq.~\eqref{eq:vis-general}, this constraint can be expressed
entirely in terms of measurable visibilities, yielding the
\emph{visibility inequality}
\begin{align}
&\frac{(|c_1|^2+|c_2|^2)^2}{4|c_1c_2|^2}\,\mathcal{V}_{12}^2
+\frac{(|c_2|^2+|c_3|^2)^2}{4|c_2c_3|^2}\,\mathcal{V}_{23}^2
\nonumber\\
&\qquad
-\frac{(|c_1|^2+|c_3|^2)^2}{4|c_1c_3|^2}\,\mathcal{V}_{13}^2
\le1,
\label{vis-ineq-asym}
\end{align}
valid for arbitrary nonzero path amplitudes under ideal interferometric
conditions.

For the symmetric interferometer ($|c_i|=1/\sqrt3$), this reduces to
\begin{equation}
\mathcal{V}_{12}^2+\mathcal{V}_{23}^2-\mathcal{V}_{13}^2\le1.
\label{vis-ineq}
\end{equation}

Any jointly diagonalizable family of detector states satisfies
Eq.~\eqref{classical}, and therefore the visibility inequality
\eqref{vis-ineq} in the symmetric interferometer. Consequently, any
violation of Eq.~\eqref{vis-ineq} witnesses basis-independent coherence
by ruling out a jointly diagonalizable description.

\subsection{Maximal quantum violation for three states}

For $n=3$, the optimal overlap configuration yielding $S=5/4$ was
previously identified in Ref.~\cite{Galvao2020}. We include a
self-contained derivation because it provides the conceptual starting
point for the general $n$-cycle analysis developed below.

\begin{theorem}
\label{th:maximal}
For any three pure detector states with overlaps
$r_{ij}=|\langle d_i|d_j\rangle|^2$,
\[
S \equiv r_{12}+r_{23}-r_{13} \le \frac54.
\]
The bound is tight and is achieved by
$|d_2\rangle=|0\rangle$,
$|d_{1,3}\rangle=\tfrac{\sqrt3}{2}|0\rangle\pm\tfrac12|1\rangle$,
giving $r_{12}=r_{23}=3/4$, $r_{13}=1/4$.
\end{theorem}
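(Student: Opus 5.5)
The natural route is through the Fubini--Study (Bures) angle. For pure states define $\theta_{ij}\in[0,\pi/2]$ by $r_{ij}=\cos^2\theta_{ij}$, i.e.\ $\cos\theta_{ij}=|\langle d_i|d_j\rangle|$. The Fubini--Study angle is a metric on projective Hilbert space, so it obeys the triangle inequality $\theta_{13}\le\theta_{12}+\theta_{23}$. This holds in any finite dimension: the three states span at most a three-dimensional subspace, and the metric property is standard. This is the only geometric input we need. After using it, the problem becomes a purely trigonometric maximization over two angles.

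The key steps are as follows.
\begin{enumerate}
\item Write $\alpha=\theta_{12}$ and $\beta=\theta_{23}$, so that $S=\cos^2\alpha+\cos^2\beta-\cos^2\theta_{13}$.
\item Since $\cos^2$ is decreasing on $[0,\pi/2]$, for fixed $\alpha,\beta$ the quantity $S$ is largest when $\theta_{13}$ is as large as allowed, namely $\theta_{13}=\min(\alpha+\beta,\pi/2)$.
\item If $\alpha+\beta\ge\pi/2$, the last term vanishes. Using $\cos^2\beta\le\sin^2\alpha$ (because $\beta\ge\pi/2-\alpha$) gives $S\le\cos^2\alpha+\sin^2\alpha=1<5/4$.
\item If $\alpha+\beta<\pi/2$, convert to double angles: $\cos^2\alpha+\cos^2\beta=1+\cos(\alpha+\beta)\cos(\alpha-\beta)\le 1+\cos(\alpha+\beta)$. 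Setting $x=\cos(\alpha+\beta)\in[0,1]$, this yields $S\le 1+x-x^2$.
\item The function $1+x-x^2$ has maximum $5/4$ at $x=1/2$. This corresponds to $\alpha=\beta=\pi/6$ and $\theta_{13}=\pi/3$.
\end{enumerate}

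For tightness, check the stated states directly. One has $\langle 0|d_{1,3}\rangle=\sqrt3/2$, so $r_{12}=r_{23}=3/4$. Also $\langle d_1|d_3\rangle=3/4-1/4=1/2$, so $r_{13}=1/4$. Hence $S=3/4+3/4-1/4=5/4$. This matches the extremal angles found above: $\pi/6$ and $\pi/3$ on a great circle, i.e.\ Bloch angles doubled.

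The main point needing care is justifying the triangle inequality for the angle $\arccos|\langle\cdot|\cdot\rangle|$ in arbitrary dimension, including the global phase freedom. I would cite the Fubini--Study metric as in \cite{BengtssonZyczkowski2017}.

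Alternatively, a self-contained proof is possible. Fix phases so that $\langle d_1|d_2\rangle$ and $\langle d_2|d_3\rangle$ are real and nonnegative. Decompose $|d_1\rangle$ and $|d_3\rangle$ into components parallel and orthogonal to $|d_2\rangle$. Cauchy--Schwarz on the orthogonal parts then gives $|\langle d_1|d_3\rangle|\ge\cos\alpha\cos\beta-\sin\alpha\sin\beta=\cos(\alpha+\beta)$, which is exactly the needed inequality.
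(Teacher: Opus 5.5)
Your proof is correct and follows the paper's argument essentially step for step. The angle parametrization, the reduction to $S\le1$ when $\theta_{12}+\theta_{23}\ge\pi/2$, the bound $r_{13}\ge\cos^2(\theta_{12}+\theta_{23})$ otherwise, the identity $\cos^2\alpha+\cos^2\beta=1+\cos(\alpha+\beta)\cos(\alpha-\beta)$, and the quadratic $1+x-x^2\le5/4$ all appear in the paper's proof. The only difference is how you justify the key bound on $r_{13}$: you use the Fubini--Study triangle inequality (or your Cauchy--Schwarz decomposition about $|d_2\rangle$), where the paper uses the $3\times3$ Gram-determinant computation in Appendix~\ref{app:gram}. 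The two are equivalent, yours is shorter, and it is the same tool the paper itself uses for general $n$ in Theorem~\ref{thm:main}.
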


\begin{proof}
For fixed $r_{12}$ and $r_{23}$, minimizing $r_{13}$ maximizes
$S=r_{12}+r_{23}-r_{13}$. From Appendix~\ref{app:gram},
\[
r_{13}\ge
\begin{cases}
(\sqrt{r_{12}r_{23}}-\sqrt{(1-r_{12})(1-r_{23})})^2, & r_{12}+r_{23}>1,\\
0, & \text{otherwise}.
\end{cases}
\]
If $r_{12}+r_{23}\le1$, then $S\le1$. Hence the optimum requires
$r_{12}+r_{23}>1$ and
\[
S\le r_{12}+r_{23}-(\sqrt{r_{12}r_{23}}-
\sqrt{(1-r_{12})(1-r_{23})})^2.
\]

Set $r_{12}=\cos^2\beta$, $r_{23}=\cos^2\gamma$
($\beta,\gamma\in[0,\pi/2]$). Then
$S\le\cos^2\beta+\cos^2\gamma-\cos^2(\beta+\gamma)=:F(\beta,\gamma)$.
Let $u=\beta+\gamma$, $\delta=\beta-\gamma$. Using
$\cos2\beta+\cos2\gamma=2\cos u\cos\delta$,
\[
F=\tfrac12+\cos u\cos\delta-\tfrac12\cos2u.
\]

Since $|\delta|\le\min(u,\pi-u)$ and $F$ is linear in $\cos\delta$, the
maximum occurs at $\delta=0$ for $u\le\pi/2$ ($\cos u\ge0$) and at
$|\delta|=\pi-u$ for $u\ge\pi/2$ ($\cos u\le0$). Hence
\[
M(u):=\max_\delta F=
\begin{cases}
1+\cos u-\cos^2u, & 0\le u\le\pi/2,\\[4pt]
1-2\cos^2u, & \pi/2\le u\le\pi.
\end{cases}
\]

The second branch satisfies $M(u)\le1$, so the global maximum comes from
the first branch. Therefore,
\[
M(u)=1+\cos u-\cos^2u=\tfrac54-(\cos u-\tfrac12)^2\le\tfrac54,
\]
with equality iff $\cos u=1/2$ and $\delta=0$.

Thus, $S\le5/4$, with equality only when $u=\pi/3$, $\delta=0$, i.e.\
$\beta=\gamma=\pi/6$. Hence $r_{12}=r_{23}=3/4$, and from the sharp bound
$r_{13}=1/4$. Therefore
\[
S_{\max}=\tfrac34+\tfrac34-\tfrac14=\tfrac54.
\]
\end{proof}

Since the bound holds in every finite dimension and is attained by a
qubit realization, the global maximum is achieved already in dimension
two; higher-dimensional systems cannot yield larger values of $S$.

\subsection{Three-path experimental realization}
\label{subsec:three-path}

A direct experimental realization uses three which-path detector states
encoded in photon polarization:
\begin{equation}
|d_1\rangle
=
\frac{\sqrt3}{2}|H\rangle
+
\frac12|V\rangle,
\quad
|d_2\rangle
=
|H\rangle,
\quad
|d_3\rangle
=
\frac{\sqrt3}{2}|H\rangle
-
\frac12|V\rangle.
\end{equation}
These states realize the optimal configuration of
Theorem~\ref{th:maximal} and saturate the quantum bound $S=5/4$.

In a symmetric three-path interferometer, the corresponding visibilities
are
\[
\mathcal{V}_{12}
=
\mathcal{V}_{23}
=
\frac{\sqrt3}{2}
\approx0.866,
\qquad
\mathcal{V}_{13}
=
\frac12.
\]
Consequently,
\begin{equation}
\mathcal{V}_{12}^2
+
\mathcal{V}_{23}^2
-
\mathcal{V}_{13}^2
=
\frac34+\frac34-\frac14
=
\frac54
>
1,
\end{equation}
demonstrating a maximal violation of the classical visibility bound.

\subsection{Uniform visibility reduction}

In realistic interferometers, imperfect mode overlap, decoherence, and
alignment errors reduce the observed two-path visibility. We model this
by a uniform efficiency factor $0<\eta\le1$, assuming
\[
\mathcal{V}_{ij}^{\mathrm{exp}}
=
\eta\,\mathcal{V}_{ij}
\]
for all path pairs.

Since each term in the cycle expression is quadratic in the visibility,
the experimentally observed value becomes
\[
S_{\mathrm{exp}}
=
\eta^2 S.
\]
Substituting into Eq.~\eqref{vis-ineq}, a violation occurs whenever
\[
\eta^2 S > 1,
\]
or equivalently,
\[
\eta>\frac{1}{\sqrt S}.
\]

For the optimal three-state configuration,
\[
\eta
>
\frac{2}{\sqrt5}
\approx0.894.
\]
This threshold lies within the performance range of current integrated
photonic platforms~\cite{Matthews2009,Heo2025}, making an experimental
demonstration of the maximal three-state violation feasible with
present-day technology.

\section{Connection to preparation noncontextuality}
\label{sec:lsss}

We now show that the interferometric visibility measurements described
above can be reinterpreted as an operational test of preparation
noncontextuality. While the overlap inequalities of
Refs.~\cite{Galvao2020,Wagner2024} are known to witness preparation
contextuality in principle, evaluating the required overlaps generally
relies on state tomography or dedicated overlap-estimation techniques
such as SWAP tests. Here we demonstrate that a standard multi-path
interferometer provides a direct experimental realization based solely
on pairwise visibility measurements.

The detector states $\{|d_i\rangle\}$ prepared by the interferometer
define a family of operational preparation procedures
$\mathcal{P}_i \equiv |d_i\rangle\langle d_i|$. For each detector state,
consider the binary projective measurement
\[
\mathcal{M}_i
=
\Bigl\{
|d_i\rangle\langle d_i|,
\,
\mathbb{I}-|d_i\rangle\langle d_i|
\Bigr\},
\]
where, for qubit states, the second outcome corresponds to the rank-one
projector $|d_i^\perp\rangle\langle d_i^\perp|$.

The pairwise overlap $r_{ij}=|\langle d_i|d_j\rangle|^2$ is then the
probability of obtaining the first outcome of $\mathcal{M}_i$ when the
system is prepared according to $\mathcal{P}_j$. In the symmetric
interferometer, Eq.~\eqref{eq:vis-sym} gives
$r_{ij} = \mathcal{V}_{ij}^{\,2}$, so these operational quantities are
directly measured by pairwise visibilities---no state tomography or
multi-copy measurements are needed.

To connect to preparation noncontextuality, note that for qubit states
the post-measurement preparations
$\mathcal{P}_{i,0}=|d_i\rangle\langle d_i|$ and
$\mathcal{P}_{i,1}=|d_i^\perp\rangle\langle d_i^\perp|$ satisfy
\[
\frac12\mathcal{P}_{i,0} + \frac12\mathcal{P}_{i,1}
= \frac{\mathbb{I}}{2},
\]
independently of $i$. Hence the equal-weight mixtures for different
detector states are operationally equivalent. Preparation
noncontextuality~\cite{Spekkens2005} requires that operationally
equivalent preparation procedures be represented by identical epistemic
states in any ontological model, which constrains the observable overlaps
to lie within the classical overlap polytope derived in
Refs.~\cite{Galvao2020,Wagner2024}. For three states, the nontrivial
facet inequalities are $r_{12}+r_{23}-r_{13}\le1$ and its cyclic
permutations, which translate via Eq.~\eqref{eq:vis-sym} to the
visibility inequality~\eqref{vis-ineq}.

Thus a violation of
$\mathcal{V}_{12}^2+\mathcal{V}_{23}^2-\mathcal{V}_{13}^2\le1$
simultaneously witnesses basis-independent coherence (by ruling out joint
diagonalizability~\cite{Designolle2021}) and preparation contextuality
(by ruling out a noncontextual ontological model). The required overlap
statistics are obtained directly from measured visibilities, avoiding
the need for full state tomography or dedicated overlap-estimation
protocols such as SWAP tests.

\section{Exact quantum maximum for general $n$}
\label{sec:tight-n}

The three-state inequality of Theorem~\ref{th:maximal} is the first
nontrivial member of a hierarchy of overlap-cycle inequalities. For
arbitrary cycle length $n\ge3$, define the $n$-cycle overlap expression
\begin{equation}
S_n \equiv \sum_{i=1}^{n-1} r_{i,i+1} - r_{1n},
\label{eq:Sn-def}
\end{equation}
where $r_{ij}=|\langle d_i|d_j\rangle|^2$. Equivalent cycle inequalities
are obtained by cyclic permutations of the detector labels.

For jointly diagonalizable (classical) states, iterating the three-state
inequality $r_{ij}+r_{jk}-r_{ik}\le1$ along the cycle gives the classical
bound $S_n\le n-2$ (Proposition~\ref{prop:classical}). The first
nontrivial violation occurs at $n=3$. We now determine the exact quantum
maximum of $S_n$ for arbitrary $n\ge3$ and prove that it is already
attained by qubit states.

\begin{proposition}[Classical bound for the $n$-cycle]
\label{prop:classical}
For any jointly diagonalizable family of states,
\[
S_n := \sum_{i=1}^{n-1} r_{i,i+1} - r_{1n} \le n-2.
\]
\end{proposition}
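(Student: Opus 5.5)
The plan is to induct on $n$, using the three-state facet inequality $r_{ij}+r_{jk}-r_{ik}\le1$ as the only input. That inequality holds for any jointly diagonalizable triple (Eq.~\eqref{classical} and its permutations). Any subfamily of a jointly diagonalizable family is again jointly diagonalizable, so the inequality applies to every triple $(d_1,d_k,d_{k+1})$ taken from the family.

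For the base case $n=3$, the claim $S_3=r_{12}+r_{23}-r_{13}\le1$ is exactly Eq.~\eqref{classical}.

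For the inductive step, write
\[
S_n=\sum_{i=1}^{n-2} r_{i,i+1}-r_{1,n-1}+\bigl(r_{1,n-1}+r_{n-1,n}-r_{1n}\bigr)=S_{n-1}+\bigl(r_{1,n-1}+r_{n-1,n}-r_{1n}\bigr).
\]
Here $S_{n-1}$ is the cycle expression for the jointly diagonalizable subfamily $d_1,\dots,d_{n-1}$, so $S_{n-1}\le n-3$ by the induction hypothesis. The bracket is the three-state expression for the triple $(d_1,d_{n-1},d_n)$ with middle element $d_{n-1}$, so it is at most $1$. Adding the two bounds gives $S_n\le n-2$.

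Unrolled, this is the same as telescoping the chord $r_{1k}$ for $k=3,\dots,n-1$:
\[
S_n=\sum_{k=2}^{n-1}\bigl(r_{1k}+r_{k,k+1}-r_{1,k+1}\bigr),
\]
where the $k=2$ term is $r_{12}+r_{23}-r_{13}$. This is a sum of $n-2$ three-state expressions, each bounded by $1$, which gives the result directly.

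There is no serious obstacle. The only points to check are that each triple uses the right middle element, so the correct permutation of Eq.~\eqref{classical} applies, and that the telescoping identity is correct. Optionally, one can show tightness with a classical example: take all states equal except that $d_1$ and $d_n$ are orthogonal basis states. Then $r_{1n}=0$ and exactly one adjacent overlap vanishes, which gives $S_n=n-2$.
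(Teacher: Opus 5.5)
Your proof is correct and is essentially the paper's argument. You use the same induction, splitting off the three-state term $r_{1,n-1}+r_{n-1,n}-r_{1n}$ for the triple $(d_1,d_{n-1},d_n)$ and bounding it by $1$ with the facet inequality; your telescoped form, your remark that subfamilies stay jointly diagonalizable, and your tightness example ($d_1=\cdots=d_{n-1}\perp d_n$, giving $S_n=n-2$) are valid additions that the paper leaves implicit or omits.
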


\begin{proof}
We proceed by induction on $n$. For $n=3$, the claim is precisely the
three-state overlap inequality $r_{12}+r_{23}-r_{13}\le1$. Assume
$S_n\le n-2$. Writing
$S_{n+1}=S_n+(r_{1n}+r_{n,n+1}-r_{1,n+1})$, the bracketed term is
bounded by $1$ by the three-state inequality. Hence
$S_{n+1}\le (n-2)+1=n-1$, completing the induction.
\end{proof}

\begin{lemma}[Fixed-sum extremization]
\label{lem:fixedT}
Let $m \ge 2$ be an integer and $T \in [0,\pi/2]$. Then
\[
\max_{\substack{\alpha_i \in [0,\pi/2]\\ \sum_{i=1}^m \alpha_i = T}}
\sum_{i=1}^m \cos^2 \alpha_i
=
m\cos^2\!\left(\frac{T}{m}\right),
\]
and the uniform configuration $\alpha_i=T/m$ is a maximizer.
It is unique except in the boundary case $m=2$ and $T=\pi/2$,
where every feasible pair $(\alpha,\pi/2-\alpha)$ is a maximizer.
\end{lemma}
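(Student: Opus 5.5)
The plan is to reduce everything to a two-variable identity and then run a smoothing (pairwise averaging) argument on a maximizer, which exists by compactness. Since every $\alpha_i\ge0$ and $\sum_i\alpha_i=T\le\pi/2$, each $\alpha_i$ automatically lies in $[0,T]\subseteq[0,\pi/2]$. The feasible set is therefore the compact simplex $\{\alpha_i\ge0,\ \sum_i\alpha_i=T\}$, and the continuous objective $\sum_i\cos^2\alpha_i$ attains its maximum there.

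\emph{Pairwise identity.} For two angles $a,b\ge0$ with $a+b=s\le\pi/2$, the same manipulation used in the proof of Theorem~\ref{th:maximal} gives
\[
\cos^2a+\cos^2b=1+\tfrac12(\cos2a+\cos2b)=1+\cos s\,\cos(a-b).
\]
If $s<\pi/2$, then $\cos s>0$, and this expression is strictly maximized over $|a-b|\le s$ at $a=b=s/2$. Hence replacing an unequal pair by its average strictly increases the objective while preserving feasibility. If $s=\pi/2$, the expression equals $1$ identically in $a-b$.

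\emph{Uniformity of maximizers.} Let $\alpha^\ast$ be a maximizer, and suppose two coordinates $\alpha^\ast_i\neq\alpha^\ast_j$ exist.
\begin{itemize}
\item If $\alpha^\ast_i+\alpha^\ast_j<\pi/2$, averaging them strictly increases the objective, contradicting maximality.
\item Otherwise $\alpha^\ast_i+\alpha^\ast_j=\pi/2=T$, so every other coordinate is $0$.
\item For $m=2$ this is exactly the stated exceptional case. There every pair $(\alpha,\pi/2-\alpha)$ gives value $1=2\cos^2(\pi/4)$, so all such pairs are maximizers and the claimed maximum value still holds.
\item For $m\ge3$, there is a third index $k$ with $\alpha^\ast_k=0$. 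Since $\alpha^\ast_i\neq\alpha^\ast_j$ and their sum is $\pi/2$, neither equals $\pi/2$. Take whichever of $i,j$ carries a positive angle, say $\alpha^\ast_i>0$. The pair $(\alpha^\ast_i,\alpha^\ast_k)=(\alpha^\ast_i,0)$ is then unequal with sum $\alpha^\ast_i<\pi/2$, and averaging it strictly improves the objective. This is again a contradiction.
\end{itemize}
So outside the exceptional case, every maximizer has all coordinates equal. The constraint then forces $\alpha_i=T/m$, which gives the value $m\cos^2(T/m)$ and yields both the formula and uniqueness.

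The main obstacle is the careful treatment of the boundary $s=\pi/2$, where the pairwise step gives no strict gain. The averaging argument itself is routine; the work lies in the case analysis above showing that $m\ge3$ always admits an alternative pair with sum strictly below $\pi/2$, and in checking directly that the degenerate family for $m=2$ attains the same value $1$.
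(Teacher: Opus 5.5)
Your averaging argument is essentially the paper's, which works in the doubled variables $x_i=2\alpha_i$. However, it has a genuine gap exactly at the boundary you flagged as the main obstacle. In the $m\ge3$ bullet you claim that because $\alpha^\ast_i\neq\alpha^\ast_j$ and $\alpha^\ast_i+\alpha^\ast_j=\pi/2$, neither equals $\pi/2$. That is false: $(\alpha^\ast_i,\alpha^\ast_j)=(\pi/2,0)$ is an unequal pair summing to $\pi/2$.

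For $T=\pi/2$ and $m\ge3$, the configuration $(\pi/2,0,\ldots,0)$ is therefore never excluded. Its only unequal pairs are $(\pi/2,0)$, each with sum exactly $\pi/2$, so by your own identity no single averaging step changes its value. Your step ``pair the positive angle with a zero coordinate'' gives sum $\pi/2$, not $<\pi/2$. As written, your argument only shows that every maximizer is either uniform or a permutation of $(\pi/2,0,\ldots,0)$. In this case neither the value $m\cos^2(\pi/(2m))$ nor uniqueness is established.

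This configuration is exactly the candidate $(\pi,0,\ldots,0)$ that the paper isolates and then eliminates by a direct value comparison, $m-2<m\cos(\pi/m)$, using $1-\cos x<x^2/2$. You can close the gap the same way. In your variables, $m\cos^2(\pi/(2m))-(m-1)=1-m\sin^2(\pi/(2m))>1-\pi^2/(4m)>0$ for $m\ge3$.

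Alternatively, closer to your smoothing argument, use a two-step move:
\begin{enumerate}
\item By the $s=\pi/2$ identity, replace the unequal pair $(\alpha^\ast_i,\alpha^\ast_j)$ by $(\pi/4,\pi/4)$. This leaves the objective unchanged and so gives another maximizer.
\item Average that $\pi/4$ with the zero coordinate $\alpha^\ast_k$. This gives a strict gain, a contradiction.
\end{enumerate}
This handles both sub-cases at once, whether or not one of the pair equals $\pi/2$. The rest of your proof is correct: compactness, the pairwise identity with its strictness for $s<\pi/2$, and the $m=2$ exceptional case.
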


\begin{proof}
Set $x_i=2\alpha_i\in[0,\pi]$. Then $\sum_{i=1}^m x_i = 2T \le \pi$.
Since
\[
\sum_{i=1}^m\cos^2\alpha_i
= \frac{m}{2} + \frac{1}{2}\sum_{i=1}^m\cos x_i,
\]
it suffices to maximize $F(x):=\sum_{i=1}^m\cos x_i$ subject to
$x_i\in[0,\pi]$ and $\sum_i x_i = 2T$.

If $x_a\neq x_b$ and $x_a+x_b=s<\pi$, averaging them to $(s/2,s/2)$
preserves the constraints. With $d=x_a-x_b$, the sum-to-product identity
gives $\cos x_a+\cos x_b=2\cos(s/2)\cos(d/2)$. Since $0<|d|<s<\pi$, we
have $\cos(s/2)>0$ and $0<\cos(d/2)<1$, so
$\cos x_a+\cos x_b<2\cos(s/2)$. Thus averaging strictly increases $F$.

Let $(x_1,\ldots,x_m)$ be a non-uniform maximizer. No unequal pair can
sum to $<\pi$, else averaging would increase $F$. Hence every unequal
pair must satisfy $x_a+x_b\ge\pi$. Pick $x_a\neq x_b$; then, since all
coordinates are nonnegative,
\[
\pi\le x_a+x_b\le\sum_i x_i = 2T \le \pi,
\]
so $x_a+x_b = 2T = \pi$ and all remaining coordinates vanish. If one
of $x_a,x_b$ lies in $(0,\pi)$, pairing it with a zero coordinate
produces an unequal pair with sum $<\pi$, a contradiction. Thus the
only non-uniform candidate is $(\pi,0,\ldots,0)$ up to permutation,
and necessarily $2T = \pi$.

For this candidate,
\[
F = m-2.
\]
The corresponding uniform configuration has
\[
x_i = \frac{\pi}{m},
\qquad
F = m\cos\!\left(\frac{\pi}{m}\right).
\]

Using $1-\cos x < x^2/2$ for $x>0$,
\begin{align*}
m\cos(\pi/m)-(m-2)
&= 2 - m\bigl(1-\cos(\pi/m)\bigr) \\
&> 2 - \frac{\pi^2}{2m} > 0
\end{align*}
for all $m \ge 3$. Hence the uniform configuration gives a strictly
larger value than the non-uniform candidate, so it is the unique
maximizer for $m\ge 3$.

For $m=2$, strict improvement under averaging holds whenever $T<\pi/2$.
If $T=\pi/2$, then $\alpha_2=\pi/2-\alpha_1$ and
\[
\cos^2\alpha_1+\cos^2\alpha_2 = \cos^2\alpha_1+\sin^2\alpha_1 = 1,
\]
so every feasible pair attains the optimal value $1=2\cos^2(\pi/4)$.
Thus, for $m\ge3$ the uniform configuration is the unique maximizer.
For $m=2$, the uniform configuration is a maximizer, unique except at
$T=\pi/2$ where the objective is constant on the feasible set.
\end{proof}

\begin{lemma}[Monotonicity of the maximal sum]
\label{lem:monotone}
For $m\ge2$, define
\[
g(T) = \max_{\substack{\alpha_i \in [0,\pi/2] \\
\sum_{i=1}^m \alpha_i = T}} 
\sum_{i=1}^m \cos^2\alpha_i,
\qquad T\in[0,m\pi/2].
\]
Then $g(T)$ is nonincreasing on $[0,m\pi/2]$.
\end{lemma}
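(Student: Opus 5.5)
Our plan is a direct "shrinking" argument on maximizers, using only the decrease of $\cos^2$ on $[0,\pi/2]$. Fix $0\le T<T'\le m\pi/2$. The feasible set for $T'$ is compact (a slice of the cube $[0,\pi/2]^m$ by a hyperplane) and the objective is continuous, so $g(T')$ is attained at some $(\alpha_1',\ldots,\alpha_m')$ with $\alpha_i'\in[0,\pi/2]$ and $\sum_i\alpha_i'=T'$. We then build a feasible point for the constraint $\sum_i\alpha_i=T$ whose objective value is at least $g(T')$. This gives $g(T)\ge g(T')$, which is the claim.

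For the construction we scale uniformly, setting $\alpha_i=\lambda\alpha_i'$ with $\lambda=T/T'\in[0,1)$. Each $\alpha_i$ still lies in $[0,\pi/2]$, since $0\le\lambda\alpha_i'\le\alpha_i'\le\pi/2$. The new sum is $\lambda T'=T$, so the point is feasible for $T$. The function $\cos^2$ is nonincreasing on $[0,\pi/2]$, and $\lambda\alpha_i'\le\alpha_i'$, so $\cos^2(\lambda\alpha_i')\ge\cos^2\alpha_i'$ holds term by term. Summing over $i$ gives $g(T)\ge\sum_i\cos^2(\lambda\alpha_i')\ge g(T')$.

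Two small points need checking. First, the feasible set must be nonempty for every $T\in[0,m\pi/2]$, so that $g$ is well defined; the uniform point $\alpha_i=T/m$ works. Second, the edge case $T'=0$ never arises, because $T'>T\ge0$. An alternative is to lower one coordinate at a time by $T'-T$, spreading the decrease across coordinates when a single one would go negative. The uniform scaling avoids this bookkeeping, so we use it.

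No step is a real obstacle. The only point that needs care is that the shrunk point stays inside the box $[0,\pi/2]^m$, and scaling by $\lambda\le1$ guarantees this. We deliberately do not use the closed form from Lemma~\ref{lem:fixedT}, because it holds only for $T\le\pi/2$, while Lemma~\ref{lem:monotone} needs the whole range $[0,m\pi/2]$. The scaling argument is independent of that formula, so it covers the full interval. For $T\le\pi/2$ it is also consistent with $m\cos^2(T/m)$ being decreasing in $T$.
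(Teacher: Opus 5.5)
Your proof is correct and is essentially the paper's argument: take a maximizer $(\alpha_i')$ for the larger sum, lower each coordinate within $[0,\alpha_i']$ so the total drops to $T$, and use that $\cos^2$ is decreasing on $[0,\pi/2]$. Your uniform scaling is an explicit instance of the paper's decrements $r_i\in[0,\alpha_i]$ with $\sum_i r_i=T'-T$, namely $r_i=(1-T/T')\alpha_i'$, and your compactness remark supplies the existence of the maximizer that the paper takes for granted.
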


\begin{proof}
Let $T_2>T_1$ and let $\{\alpha_i\}$ be a maximizer for $g(T_2)$.
Set $\delta=T_2-T_1>0$. Since
\[
\delta<T_2=\sum_{i=1}^m\alpha_i,
\]
choose $r_i\in[0,\alpha_i]$ with $\sum_{i=1}^m r_i=\delta$, and set
$\beta_i:=\alpha_i-r_i$. Then $\beta_i\in[0,\alpha_i]$ and
\[
\sum_{i=1}^m\beta_i
=
\sum_{i=1}^m\alpha_i
-
\sum_{i=1}^m r_i
=
T_2-\delta
=
T_1.
\]

Since $(\beta_1,\ldots,\beta_m)$ is feasible for the optimization problem
defining $g(T_1)$, we have
\[
g(T_1) \ge \sum_{i=1}^m \cos^2\beta_i.
\]

Because $\cos^2$ is decreasing on $[0,\pi/2]$, we have
$\cos^2\beta_i \ge \cos^2\alpha_i$ for each $i$. Hence
\[
\sum_{i=1}^m \cos^2\beta_i
\ge
\sum_{i=1}^m \cos^2\alpha_i
=
g(T_2).
\]

Combining the two inequalities gives $g(T_1) \ge g(T_2)$.
\end{proof}

\begin{theorem}[Exact quantum maximum of the $n$-cycle]
\label{thm:main}
For every integer $n\ge3$ and every finite-dimensional
Hilbert-space realization,
\[
S_n = \sum_{i=1}^{n-1} r_{i,i+1} - r_{1n}
\le n\cos^2\!\left(\frac{\pi}{2n}\right)-1.
\]
The bound is tight.
\end{theorem}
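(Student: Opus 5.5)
The plan is to encode the overlaps by Fubini--Study angles, turn the chain of pure states into a constraint on the sum of those angles, and then apply Lemmas~\ref{lem:fixedT} and~\ref{lem:monotone} with $m=n-1$.

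\textbf{Angle encoding and reductions.} For each pair write $r_{ij}=\cos^2\theta_{ij}$ with $\theta_{ij}=\arccos|\langle d_i|d_j\rangle|\in[0,\pi/2]$. This is the Fubini--Study distance between the rays. We may assume pure states: $S_n$ is defined through pure-state overlaps, and the detector states in the setting are pure. The key fact is the triangle inequality for the Fubini--Study metric, valid in every finite dimension: $\theta_{1n}\le\sum_{i=1}^{n-1}\theta_{i,i+1}$. Alternatively it follows by iterating the three-state Gram bound of Appendix~\ref{app:gram}, which says $\theta_{13}\le\theta_{12}+\theta_{23}$ whenever the right-hand side is at most $\pi/2$.

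\textbf{Case analysis on $T=\sum_{i=1}^{n-1}\theta_{i,i+1}$.}

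\emph{Case $T\le\pi/2$.} Then $\theta_{1n}\le T$ and $\cos^2$ is decreasing on $[0,\pi/2]$, so $r_{1n}\ge\cos^2 T$. Lemma~\ref{lem:fixedT} with $m=n-1$ gives
\[
S_n\le (n-1)\cos^2\!\left(\frac{T}{n-1}\right)-\cos^2 T=:h(T).
\]
It then suffices to maximize the one-variable function $h$ on $[0,\pi/2]$. Writing $\cos^2 x=(1+\cos 2x)/2$, the stationarity condition becomes $\sin(2T/(n-1))=\sin(2T)$. The relevant interior solution is $2T/(n-1)=\pi-2T$, that is, $T=(n-1)\pi/(2n)$. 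At this point $T/(n-1)=\pi/(2n)$ and $\cos^2 T=\sin^2(\pi/(2n))$, so
\[
h=(n-1)\cos^2\!\left(\frac{\pi}{2n}\right)-1+\cos^2\!\left(\frac{\pi}{2n}\right)=n\cos^2\!\left(\frac{\pi}{2n}\right)-1.
\]
To confirm this is the global maximum, compare with the endpoints, $h(0)=n-2$ and $h(\pi/2)=(n-1)\cos^2(\pi/(2n-2))$. Then show $h$ is unimodal, or exclude the other critical points, which satisfy $2T/(n-1)=2T-2k\pi$ and are out of range for small $k$.

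\emph{Case $T>\pi/2$.} Use $r_{1n}\ge0$ together with Lemma~\ref{lem:monotone}: $S_n\le g(T)\le g(\pi/2)=(n-1)\cos^2(\pi/(2n-2))=h(\pi/2)$. This is already dominated by the bound from the first case.

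\textbf{Tightness.} Take coplanar qubit states $|d_k\rangle=\cos(k\phi)|0\rangle+\sin(k\phi)|1\rangle$ with $\phi=\pi/(2n)$. Adjacent overlaps are then $\cos^2(\pi/(2n))$, and $r_{1n}=\cos^2((n-1)\pi/(2n))=\sin^2(\pi/(2n))$. These states are equally spaced on a geodesic of length $(n-1)\pi/(2n)$ and give exactly $n\cos^2(\pi/(2n))-1$.

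\textbf{Main obstacle.} I expect the hardest step to be the rigorous global maximization of $h$ on $[0,\pi/2]$. One must show that the interior critical point $T=(n-1)\pi/(2n)$ beats both endpoints. For the endpoint $h(0)=n-2$ this means $n\cos^2(\pi/(2n))-1>n-2$, i.e. $n\sin^2(\pi/(2n))<1$, which follows from $\sin x<x$ since $n\sin^2(\pi/(2n))<\pi^2/(4n)<1$. For the endpoint $h(\pi/2)$, I would use the same Taylor-type estimates as in Lemma~\ref{lem:fixedT}. A secondary point is justifying the Fubini--Study triangle inequality in arbitrary dimension. This reduces to the three-state Gram bound, because any three vectors span at most a three-dimensional subspace.
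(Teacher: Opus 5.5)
Your proposal is correct and follows the paper's proof essentially step for step. The shared steps are the Fubini--Study triangle inequality, the split at $T=\pi/2$, Lemma~\ref{lem:fixedT} with $m=n-1$ followed by a one-variable maximization in the first case, Lemma~\ref{lem:monotone} in the second case, and the same coplanar qubit family for tightness. Your stationarity condition $\sin(2T/(n-1))=\sin 2T$ is the paper's factorization $K'(\phi)=2(n-1)\cos(n\phi)\sin((n-2)\phi)$ with $\phi=T/(n-1)$.

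The one step you only promise, $n\cos^2(\pi/(2n))-1>(n-1)\cos^2(\pi/(2n-2))$, needs slightly more than the one-sided bound $1-\cos x<x^2/2$ from Lemma~\ref{lem:fixedT}. The difference between the two sides is only of order $1/n^2$, so you must keep the quartic term. Using $\cos y\le 1-y^2/2+y^4/24$ reduces the claim to $12(n-1)^2>\pi^2 n$, which holds for all $n\ge3$. The paper instead applies the mean value theorem to $x\cos(\pi/x)$ in Appendix~\ref{app:boundary}.
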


\begin{proof}
We define the Fubini--Study angles
\[
\begin{aligned}
\alpha_i &= \arccos|\langle d_i|d_{i+1}\rangle| \in [0,\pi/2]
\quad (i=1,\dots,n-1),\\
\alpha_{1n} &= \arccos|\langle d_1|d_n\rangle| \in [0,\pi/2],
\end{aligned}
\]
where $\alpha_{ij}=\arccos|\langle d_i|d_j\rangle|$ is the
Fubini--Study distance between pure
states~\cite{Wootters1981,BengtssonZyczkowski2017}.

Since the Fubini--Study distance is a metric on projective Hilbert
space~\cite{AnandanAharonov1990,BengtssonZyczkowski2017},
the triangle inequality yields
\[
\alpha_{1n} \le \sum_{i=1}^{n-1} \alpha_i \equiv T.
\]

Geometrically, $T$ is the total Fubini--Study length of the polygonal
chain $d_1\to d_2\to\cdots\to d_n$.

Moreover, $0\le\alpha_i\le\pi/2$ for all $i$, so
\[
0\le T\le \frac{(n-1)\pi}{2}.
\]

Using
$|\langle d_i|d_{i+1}\rangle|^2=\cos^2\alpha_i$
and
$|\langle d_1|d_n\rangle|^2=\cos^2\alpha_{1n}$,
we obtain
\begin{equation}
S_n = \sum_{i=1}^{n-1} \cos^2\alpha_i - \cos^2\alpha_{1n}.
\label{eq:Sn-angle}
\end{equation}

Since $\cos^2 x$ is decreasing on $[0,\pi/2]$ but not on all of $[0,\pi]$,
we consider separately the cases $T\le\pi/2$ and $T>\pi/2$.

\subsection*{Case 1: $T \le \pi/2$}

For $T\le\pi/2$, the inequality $\alpha_{1n}\le T$ implies
$\cos^2\alpha_{1n}\ge\cos^2T$. Hence from \eqref{eq:Sn-angle},
\[
S_n \le \sum_{i=1}^{n-1}\cos^2\alpha_i-\cos^2T.
\]

Applying Lemma~\ref{lem:fixedT} with $m = n-1$ gives
\[
\sum_{i=1}^{n-1} \cos^2\alpha_i
\le (n-1)\cos^2\!\left(\frac{T}{n-1}\right),
\]
so
\[
S_n \le H(T),
\qquad
H(T):=(n-1)\cos^2\!\left(\frac{T}{n-1}\right)-\cos^2 T.
\]

To maximize $H(T)$ on $[0,\pi/2]$, set
\[
\phi = \frac{T}{n-1}, \qquad
0\le\phi\le\frac{\pi}{2(n-1)},
\]
and define
\[
K(\phi) := (n-1)\cos^2\phi - \cos^2\!\big((n-1)\phi\big).
\]
Then
\[
K'(\phi) = 2(n-1)\cos(n\phi)\sin((n-2)\phi).
\]

The stationary points satisfy
$\cos(n\phi)\sin((n-2)\phi)=0$.
The equation $\cos(n\phi)=0$ yields
$\phi=(2k+1)\pi/(2n)$, of which only
$\phi=\pi/(2n)$ lies in the open interval $(0,\pi/(2(n-1)))$.
The equation $\sin((n-2)\phi)=0$ yields
$\phi=k\pi/(n-2)$, whose first positive solution
$\pi/(n-2)$ lies outside the interval $(0,\pi/(2(n-1)))$.
Since $0<\phi<\pi/[2(n-1)]$ implies $(n-2)\phi < \pi/2$, the factor
$\sin((n-2)\phi)$ is strictly positive throughout the interior of the
interval. Hence the sign of $K'(\phi)$ is determined entirely by
$\cos(n\phi)$, and $K'$ changes from positive to negative at
$\phi=\pi/(2n)$. Therefore $\phi=\pi/(2n)$ is the only interior
stationary point and is a strict local maximum.
The boundaries of the interval are $\phi=0$ and $\phi=\pi/[2(n-1)]$.

Evaluating $K$ at these three candidates:
\begin{align*}
K(0) &= n-2,\\
K\!\left(\frac{\pi}{2n}\right)
&= n\cos^2\!\left(\frac{\pi}{2n}\right)-1 \equiv M_1,\\
K\!\left(\frac{\pi}{2(n-1)}\right)
&= (n-1)\cos^2\!\left(\frac{\pi}{2(n-1)}\right) \equiv M_2.
\end{align*}

We first compare $M_1$ with $K(0)$:
\[
M_1-(n-2)=1-n\sin^2\!\left(\frac{\pi}{2n}\right)
>1-\frac{\pi^2}{4n}\ge 1-\frac{\pi^2}{12}>0,
\]
so $M_1>K(0)$. Appendix~\ref{app:boundary} proves $M_1>M_2$.

By the extreme value theorem, a global maximizer of $K$ on the closed
interval $[0,\pi/(2(n-1))]$ must occur either at an endpoint or at an
interior stationary point. Since $\phi=\pi/(2n)$ is the only interior
stationary point and its value exceeds both endpoints, it is the global
maximizer. Hence
\[
\max_{\phi}K(\phi)=K\!\left(\frac{\pi}{2n}\right)=M_1.
\]
Consequently $H(T)\le M_1$ for all $T\le\pi/2$, and hence
$\max_{T\le\pi/2} S_n \le M_1$.

\subsection*{Case 2: $T > \pi/2$}

In this region we cannot use the bound $\cos^2\alpha_{1n}\ge\cos^2 T$,
since $\cos^2 x$ is not monotonic on $[0,\pi]$. Instead, using
$\cos^2\alpha_{1n}\ge0$ in Eq.~\eqref{eq:Sn-angle} gives
\[
S_n \le \sum_{i=1}^{n-1} \cos^2\alpha_i.
\]

Let
\[
g(T) := \max_{\substack{\alpha_i\in[0,\pi/2]\\
\sum_{i=1}^{n-1}\alpha_i = T}}
\sum_{i=1}^{n-1} \cos^2\alpha_i.
\]
Then $S_n \le \sum_{i=1}^{n-1}\cos^2\alpha_i \le g(T)$.
By Lemma~\ref{lem:monotone}, $g(T) \le g(\pi/2)$ for $T > \pi/2$, and
Lemma~\ref{lem:fixedT} yields
\[
g(\pi/2) = (n-1)\cos^2\!\left(\frac{\pi}{2(n-1)}\right) =: M_2.
\]
Hence $S_n \le M_2$.

Since $M_2 < M_1$ (Appendix~\ref{app:boundary}), no global maximizer lies
in this region. Combining with Case~1 gives $S_n \le M_1$.

\subsection*{Tightness}
The bound is saturated by the $n$ coplanar qubit states
\[
\ket{d_k} = \cos\theta_k\ket{0} + \sin\theta_k\ket{1},\qquad
\theta_k = \frac{(k-1)\pi}{2n},\quad k=1,\ldots,n.
\]
For these states,
\[
r_{i,i+1} = \cos^2\!\left(\frac{\pi}{2n}\right),\qquad
r_{1n} = \sin^2\!\left(\frac{\pi}{2n}\right).
\]
Therefore
\[
S_n = (n-1)\cos^2\!\left(\frac{\pi}{2n}\right)-\sin^2\!\left(\frac{\pi}{2n}\right)
= n\cos^2\!\left(\frac{\pi}{2n}\right)-1 = M_1.
\]

For these states,
\[
|\langle d_i|d_j\rangle| = \cos(\theta_j-\theta_i),
\]
so the Fubini--Study distance between successive states is $\pi/(2n)$.
Hence the states are equally spaced along a common Fubini--Study
geodesic, corresponding to a great circle on the Bloch sphere, of total
length $(n-1)\pi/(2n)$.

The upper bound is dimension-independent and is attained by the above
qubit construction. Therefore
\[
S_n^{\max} = n\cos^2\!\left(\frac{\pi}{2n}\right)-1.
\]
\end{proof}

Representative saturating qubit configurations for $n=3$ and $n=4$
are shown in Fig.~\ref{fig:bloch}, while
Table~\ref{tab:tight-bounds} summarizes the classical bounds, exact
quantum maxima, and corresponding visibility thresholds.

\begin{figure}[t]
\centering
\includegraphics[width=0.49\textwidth]{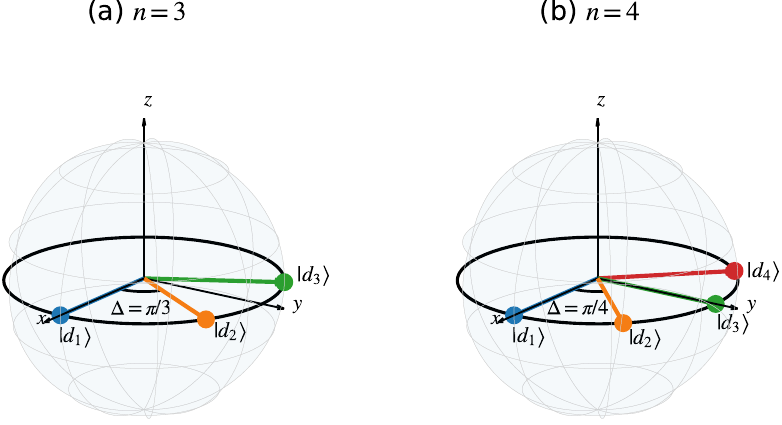}
\caption{Saturating qubit configurations for the $n$-cycle overlap
inequality.
(a) $n=3$: three coplanar detector states with uniform angular separation
$\Delta=\pi/3$.
(b) $n=4$: four coplanar detector states with uniform angular separation
$\Delta=\pi/4$.}
\label{fig:bloch}
\end{figure}

\begin{table}[t]
\centering
\footnotesize
\begin{tabular}{cccc}
\hline
$n$ & Classical bound & Quantum maximum $S_n^{\max}$ & $\eta_{\min}$ \\
\hline
3 & 1 & $5/4 = 1.25$ & $2/\sqrt{5} \approx 0.894$ \\
4 & 2 & $1+\sqrt{2} \approx 2.414$ &
$\sqrt{2/(1+\sqrt{2})} \approx 0.910$ \\
5 & 3 & $(17+5\sqrt{5})/8 \approx 3.523$ &
$\sqrt{3/3.523} \approx 0.923$ \\
6 & 4 & $2+3\sqrt{3}/2 \approx 4.598$ &
$\sqrt{4/4.598} \approx 0.933$ \\
\hline
\end{tabular}
\caption{Classical bounds, exact quantum maxima, and corresponding
visibility thresholds
$\eta_{\min}=\sqrt{(n-2)/S_n^{\max}}$ for the $n$-cycle overlap
inequality.}
\label{tab:tight-bounds}
\end{table}

\subsection{Asymptotic behavior}

For large $n$, expansion of the quantum maximum gives
\[
S_n^{\max} = n\cos^2\!\left(\frac{\pi}{2n}\right)-1
= n-1 - \frac{\pi^2}{4n} + O(n^{-3}).
\]
Thus the quantum advantage over the classical bound $n-2$ approaches a
constant:
\[
S_n^{\max} - (n-2) = 1 - \frac{\pi^2}{4n} + O(n^{-3})
\xrightarrow{n\to\infty} 1.
\]
A finite gap persists even for arbitrarily long cycles. However, the
visibility threshold scales as
\[
\eta_{\min} = \sqrt{\frac{n-2}{S_n^{\max}}}
= 1 - \frac{1}{2n} + O(n^{-2}),
\]
so near-perfect visibility is required to observe violations for
large $n$.

\section{Discussion and experimental feasibility}
\label{sec:discussion}

The overlap inequalities introduced by Galv\~{a}o and
Brod~\cite{Galvao2020} and further developed by Wagner \emph{et
al.}~\cite{Wagner2024} provide a basis-independent coherence witness
and, under operational equivalences, a preparation-noncontextuality
inequality. Here we have shown that a standard multi-path interferometer
realizes overlap-cycle inequalities using only pairwise visibility
measurements, thereby providing a direct experimental route to tests of
basis-independent coherence and preparation contextuality. Related
overlap-based coherence witnesses have also been investigated using
multiphoton indistinguishability tests~\cite{Giordani2021}.

Pairwise visibilities are obtained by opening one pair of paths at a
time and extracting the corresponding fringe contrast, without joint
measurements, coincidence detection, or state tomography, provided the
path amplitudes are known and stable. In principle, only the $n$
pairwise visibilities appearing in the cycle need to be measured.

A proof-of-principle implementation uses the optimal three-path
configuration described in Sec.~\ref{subsec:three-path}, with
polarization-encoded detector states achieving $S_3=5/4$. This requires
visibility $\eta>2/\sqrt5\approx0.894$. A scalable four-path
implementation uses detector polarization states at angles $0^\circ$,
$22.5^\circ$, $45^\circ$, and $67.5^\circ$ relative to the horizontal
axis, requiring $\eta>\sqrt{2/(1+\sqrt{2})}\approx0.910$ for a violation
(see Table~\ref{tab:tight-bounds} for additional cycle lengths). Both
thresholds are compatible with visibility levels reported in current
integrated photonic platforms~\cite{Matthews2009,Heo2025}.

\section{Conclusion}
\label{sec:conclusion}

We have completely characterized the quantum optimum of the
$n$-cycle overlap inequalities, proving that the maximum
\[
S_n^{\max}
=
n\cos^2\!\left(\frac{\pi}{2n}\right)-1
\]
is attained by coplanar qubit states equally spaced along a
Fubini--Study geodesic~\cite{AnandanAharonov1990,BengtssonZyczkowski2017}.
This establishes dimensional saturation for the overlap-cycle hierarchy:
the global optimum is already realized in dimension two.

Interestingly, the quantum advantage over the classical bound approaches
the finite value $1$ as $n\to\infty$, even for arbitrarily long cycles.

We further showed that multi-path interferometry provides a simple and
experimentally accessible realization of the overlap inequalities using
only pairwise visibility measurements. In particular, the three-path
visibility inequality
\[
\mathcal{V}_{12}^2+\mathcal{V}_{23}^2-\mathcal{V}_{13}^2\le1
\]
admits a maximal quantum value of $5/4$, attained by pure qubit states.
Under the operational equivalences of generalized noncontextuality,
violations of the visibility inequalities witness preparation
contextuality, while simultaneously certifying basis-independent
coherence.

Our results establish exact quantum benchmarks for the entire
overlap-cycle hierarchy and identify interference visibility as a
practical and experimentally accessible probe of overlap-based
nonclassicality.

\appendix
\section{Gram matrix constraints and derivation of the overlap bound}
\label{app:gram}

For three pure detector states
$\{|d_1\rangle,|d_2\rangle,|d_3\rangle\}$, define the Gram matrix
\begin{equation}
G =
\begin{pmatrix}
1 & \langle d_1|d_2\rangle & \langle d_1|d_3\rangle \\
\langle d_2|d_1\rangle & 1 & \langle d_2|d_3\rangle \\
\langle d_3|d_1\rangle & \langle d_3|d_2\rangle & 1
\end{pmatrix}.
\end{equation}
A set of complex numbers $\{\langle d_i|d_j\rangle\}$ with
$\langle d_i|d_i\rangle = 1$ and
$\langle d_i|d_j\rangle = \overline{\langle d_j|d_i\rangle}$ corresponds
to the inner products of a family of pure quantum states if and only if
the Gram matrix $G$ is positive semidefinite (see, e.g.,
Ref.~\cite{HornJohnson2012}). For a $3\times3$ Gram matrix with unit
diagonal, the $2\times2$ principal minors are $1-r_{ij}$ and are
therefore nonnegative whenever $0\le r_{ij}\le1$. Hence, under the
physical constraints $0\le r_{ij}\le1$, positive semidefiniteness of
$G$ reduces to the single condition $\det G\ge0$.

Let $r_{ij} = |\langle d_i|d_j\rangle|^2$ and write the overlaps as
$\langle d_i|d_j\rangle = \sqrt{r_{ij}}\,e^{i\theta_{ij}}$. By suitable
rephasing of the states, two overlaps may be chosen real and positive,
leaving a single gauge-invariant phase:
\begin{equation}
\langle d_1|d_2\rangle = \sqrt{r_{12}},
\quad
\langle d_2|d_3\rangle = \sqrt{r_{23}},
\quad
\langle d_1|d_3\rangle = \sqrt{r_{13}}\,e^{i\varphi},
\end{equation}
with $\varphi = \theta_{13} - \theta_{12} - \theta_{23}$.

Evaluating the determinant yields
\begin{equation}
\det G = 1 + 2\sqrt{r_{12}r_{23}r_{13}}\cos\varphi
- r_{12} - r_{23} - r_{13}.
\end{equation}
Since the coefficient of $\cos\varphi$ is
$2\sqrt{r_{12}r_{23}r_{13}}\ge0$,
$\det G$ is nondecreasing in $\cos\varphi$.
Therefore, for fixed $r_{12}$, $r_{23}$, and $r_{13}$,
its largest value is attained when $\cos\varphi=1$.
Since admissibility requires $\det G\ge0$, the smallest admissible
value of $r_{13}$ is attained when $\varphi=0$.

Setting $\varphi=0$, the determinant condition becomes
\[
1 + 2\sqrt{r_{12}r_{23}r_{13}} - r_{12} - r_{23} - r_{13} \ge 0.
\]

For fixed $r_{12}$ and $r_{23}$, viewed as a function of $r_{13}$, the
left-hand side increases for $r_{13} < r_{12}r_{23}$, reaches a maximum
at $r_{13}=r_{12}r_{23}$, and decreases for $r_{13}>r_{12}r_{23}$.
The condition is therefore satisfied precisely between the two roots of
the corresponding equality. Consequently, the desired lower bound on
$r_{13}$ corresponds to the smaller root.

Defining $x = \sqrt{r_{13}} \ge 0$, the inequality reduces to
\begin{equation}
x^2 - 2\sqrt{r_{12}r_{23}}\,x + (r_{12}+r_{23}-1) \le 0.
\label{quad-ineq}
\end{equation}
Since the coefficient of $x^2$ is positive, the inequality is satisfied
for $x$ between the two real roots,
\begin{equation}
x_\pm = \sqrt{r_{12}r_{23}} \pm \sqrt{(1-r_{12})(1-r_{23})}.
\end{equation}
For $r_{12}+r_{23} > 1$, the lower root $x_-$ is positive and gives the
minimal admissible value of $x$, yielding the tight bound
\begin{equation}
r_{13} \ge \bigl( \sqrt{r_{12}r_{23}}
- \sqrt{(1-r_{12})(1-r_{23})} \bigr)^2.
\label{bound-strong}
\end{equation}
By contrast, when $r_{12}+r_{23} \le 1$, the constant term in
Eq.~\eqref{quad-ineq} is nonpositive, so $x=0$ satisfies the quadratic
constraint and therefore $r_{13}=0$ is admissible as the tight lower
bound.

Equality in Eq.~\eqref{bound-strong} is attained, for example, when
$\varphi = 0$ for three pure qubit states whose Bloch vectors lie on a
common great circle. In this case all pairwise overlaps may be chosen
simultaneously real and positive. The determinant constraint provides
the nontrivial ingredient in the optimization of $S$ (together with the
physical bounds $0 \le r_{ij} \le 1$); see also Ref.~\cite{Galvao2020}.

\section{Endpoint comparison}
\label{app:boundary}

We show
\[
K\!\left(\frac{\pi}{2n}\right)
>
K\!\left(\frac{\pi}{2(n-1)}\right)
\]
for all integers $n\ge3$.

Using the explicit values
\[
K\!\left(\frac{\pi}{2n}\right)
=
n\cos^2\!\left(\frac{\pi}{2n}\right)-1,
\]
\[
K\!\left(\frac{\pi}{2(n-1)}\right)
=
(n-1)\cos^2\!\left(\frac{\pi}{2(n-1)}\right),
\]
it suffices to prove
\[
(n-1)\cos^2\!\left(\frac{\pi}{2(n-1)}\right)
<
n\cos^2\!\left(\frac{\pi}{2n}\right)-1.
\]

Using $\cos^2x=(1+\cos2x)/2$, this is equivalent to
\[
(n-1)\cos\!\left(\frac{\pi}{n-1}\right)
-
n\cos\!\left(\frac{\pi}{n}\right)
<
-1.
\]

Define
\[
h(x)=x\cos\!\left(\frac{\pi}{x}\right),
\qquad x\ge2.
\]
Then the inequality becomes $h(n-1)-h(n)<-1$.

By the Mean Value Theorem, there exists $\xi\in(n-1,n)$ such that $h(n)-h(n-1)=h'(\xi)$.
Hence
\[
h(n-1)-h(n)=-h'(\xi).
\]

Computing the derivative,
\[
h'(x)=\cos\!\left(\frac{\pi}{x}\right)
+\frac{\pi}{x}\sin\!\left(\frac{\pi}{x}\right).
\]

Let $t=\pi/x$. For $x>2$, we have $t\in(0,\pi/2)$ and
\[
h'(x)=\cos t+t\sin t.
\]

Since
\[
\frac{d}{dt}\bigl(\cos t+t\sin t\bigr)
=
t\cos t
>
0
\qquad (0<t<\pi/2),
\]
the function $f(t):=\cos t+t\sin t$ is strictly increasing on
$(0,\pi/2]$. Moreover,
\[
\lim_{t\to0^+}f(t)=1.
\]
Hence
\[
f(t)>1
\qquad (0<t\le\pi/2).
\]
Substituting $t=\pi/x$, we obtain
\[
h'(x)>1
\qquad (x\ge2).
\]

Since $n\ge3$, we have $\xi\in(n-1,n)\subset(2,\infty)$.
Hence $h'(\xi)>1$, and therefore
\[
h(n-1)-h(n)=-h'(\xi)<-1,
\]
which proves the claim.

\bibliography{references}
\end{document}